\documentclass[a4paper]{article}
\usepackage{authblk}
\usepackage{lmodern}
\usepackage{indentfirst}
\usepackage{scrextend}
\usepackage{amssymb,amsthm,amsmath}
\DeclareMathOperator{\lcm}{lcm}
\DeclareMathOperator{\ord}{ord}
%\usepackage{chronology}
%%%%%%% Chgt de dimension 
%\usepackage[margin=0.8in]{geometry}
%\changefontsizes[20pt]{16pt}
%\textwidth=160mm
%%%%%%%
\usepackage{cmap}
\usepackage[TS1,T1]{fontenc}
\usepackage{lmodern}
\usepackage{textcomp}
\usepackage[utf8]{inputenc}
\usepackage{amsmath}
\usepackage{hyperref}
\usepackage{graphicx}
\usepackage{tikz}
\usepackage{tikz-cd}
\usetikzlibrary{automata,arrows,
positioning,shapes,calc}
\usepackage{epigraph}
\setlength\epigraphwidth{.8\textwidth}
%%%%%
% New pasta
\theoremstyle{plainsl}
  \newtheorem{theorem}{Theorem}[section]
  \newtheorem{proposition}[theorem]{Proposition}
  \newtheorem{lemma}[theorem]{Lemma}

\theoremstyle{definition}

 \theoremstyle{remark}

\def\calM{\mathcal{M}}
\def\Z{\mathbb{Z}}
%%%%%%%%%%%%%%%%%%%%%
% Macros
%%%%%%%%%%%%%%%%%%%%%
\newcommand{\bigslant}[2]{{\raisebox{.2em}{$#1$}\left/\raisebox{-.2em}{$#2$}\right.}}

%%%%%%%%%%%%%%%%%%%%%%%
%Package Shuttle
%%%%%%%%%%%%%%%%%%%%%%%
\usepackage{ulem,xcolor}
%\sout{Texte à barrer}
%\xout{Texte à hachurer}
%\uwave{Texte à souligner par une vaguelette}

%%%%%%%%%%%%%
\begingroup
\count0=\time \divide\count0by60 % Hour
\count2=\count0 \multiply\count2by-60 \advance\count2by\time
% Min
\def\2#1{\ifnum#1<10 0\fi\the#1}
\xdef\isodayandtime{
%\centerline
  {\2\day-\2\month-\the\year\space\2{\count0}:%
  \2{\count2}}}
\endgroup
%%%%%%%%%%%%% 

\begin{document}
%\title{\Large{\bf Elementary remarks about Pisano periods.\\
%}}
\title{\textbf{Elementary remarks about\\ 
Pisano periods.}}
\author[1]{G. H. E. Duchamp\thanks{gheduchamp@gmail.com}} 
\author[2]{P. Simonnet\thanks{simonnet\_p@univ-corse.fr}}
\affil[1]{\small{LIPN, Northen Paris University, Sorbonne Paris City,\break 93430 Villetaneuse, France.}}
\affil[2]{\small{Université de Corse,\break 20250 Corte, France.}}
\date{\isodayandtime}
\maketitle
\begin{abstract}
In this short note, we reprove in a very elementary way some known facts about Pisano periods as well as some considerations about the link between Pisano periods and the order of roots of the characteristic equation. The technics only requires a small background in ring theory (merely the definition of a commutative ring). The tools set here can be reused for all linear recurrences with quadratic non-constant characteristic equation.        
\end{abstract}

\subsection*{Keywords: \textmd{Pisano periods, ring extensions, order.}}

\section{Introduction}

Let $R$ be a commutative ring. The Fibonacci sequence within $R$ is defined by the following recurrence. 
\begin{equation}
F_0=0_R,\ F_1=1_R,\ F_{n+2}=F_n+F_{n+1}
\end{equation} 
It can be easily checked that this sequence is periodic iff the characteristic of the ring $R$ is positive\footnote{Although all the Fibonacci sequence lives in the subring $\Z.1_R$ ($\simeq \bigslant{\Z}{n\Z}$ where $n$ is the characteritic of $R$), it be more compact (and a bit more elegant, due to the possible adjuntions)   
to consider the sequence as living in an arbitrary ring $R$. Some of the tools set here can be reused to cope with linear recurrences with quadratic non-constant (and invertible) characteristic equation.}.\\ 
The period of $F_n$ (called Pisano period) is exactly the order of the invertible matrix
\begin{equation}
\begin{pmatrix}
0 & 1\\
1 & 1
\end{pmatrix}
\in \calM(2,R)
\end{equation} 
For introducing matters and results with respect to Pisano periods, see \cite{Pisano}.\\
The aim of this short note is to write down some elementary facts about these periods and how, according 
to cases, it is exactly related to orders of some units in $R$.   

\section{Setting and results.} 

We discuss around the possiblity of solving the equation $X^2-X-1=0$ within $R$.

Throughout the paper, we suppose that $1_R\not=0_R$ (i.e. $R$ is not the null ring) and that $5=5.1_R$ is a unit in $R$. We will call (E) the preceding equation i.e.
\begin{equation*}
\hspace*{3cm} x^2-x-1=0\hspace*{3cm} \mbox{(E)}
\end{equation*}
and make use of $\sigma$ the automorphism of $R[x]$  sending $x\mapsto 1-x$.  

\subsubsection*{First case: $x^2-x-1=0$ has (at least) one root}

We have the following lemma

\begin{lemma}\label{lem1} We suppose that the equation (E)  has at least one root in $R$ (call it $r$ and set 
$s:=1-r$), then 
\begin{enumerate}
\item $s$ is a root of (E)
\label{item1}
\item $r$ and $s$ are units in $R$
\label{item2}
\item $r-s$ is a square root of $5$ and therefore also a unit in $R$ 
\label{item3}
\item $F_n=\frac{r^n-s^n}{r-s}$ (called Binet formula in (\cite{Pisano}, see also Max Alekseyev's answer in \cite{MO1}).
\label{item4}
\item The following matrix 
\begin{equation}
\begin{pmatrix}
1 & 1\\
r & s
\end{pmatrix}
\end{equation} 
is a unit in $\calM(2,R)$ and 
\label{item5}
\item We have in $\calM(2,R)$
\begin{equation}\label{conj1}
\begin{pmatrix}
1 & 1\\
r & s
\end{pmatrix}^{-1}
\begin{pmatrix}
0 & 1\\
1 & 1
\end{pmatrix}
\begin{pmatrix}
1 & 1\\
r & s
\end{pmatrix}
=
\begin{pmatrix}
r & 0\\
0 & s
\end{pmatrix}
\end{equation}
\label{item6}
\item Then the period of $F_n$ within $R$ is 
infinite iff $r$ or $s$ is of infinite order. 
\label{item7}
\end{enumerate} 
\end{lemma}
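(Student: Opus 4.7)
The plan is to prove the seven items in sequence, each being a direct consequence of the relation $r^{2}=r+1$ together with what precedes. For item (\ref{item1}), I expand $(1-r)^{2}-(1-r)-1$ and observe it equals $r^{2}-r-1=0$. For item (\ref{item2}), I rewrite (E) as $r(r-1)=1$; this shows $r$ is a unit with inverse $r-1=-s$, and the same for $s$ follows from item (\ref{item1}). For item (\ref{item3}), I compute $(r-s)^{2}=(2r-1)^{2}=4(r^{2}-r)+1=5$; since $5$ is assumed to be a unit, so is $r-s$ (with inverse $(r-s)/5$). For item (\ref{item4}), I argue by induction on $n$: the cases $n=0,1$ are immediate, and for the inductive step I use $r^{n+2}=r^{n+1}+r^{n}$ and the analogue for $s$, then divide by the unit $r-s$ to recover the Fibonacci recurrence.

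For items (\ref{item5}) and (\ref{item6}), set $A=\bigl(\begin{smallmatrix}0&1\\1&1\end{smallmatrix}\bigr)$, $M=\bigl(\begin{smallmatrix}1&1\\r&s\end{smallmatrix}\bigr)$ and $D=\bigl(\begin{smallmatrix}r&0\\0&s\end{smallmatrix}\bigr)$. The determinant of $M$ equals $s-r=-(r-s)$, a unit by item (\ref{item3}); hence $M$ is a unit in $\calM(2,R)$. Rather than inverting $M$ explicitly, I would establish (\ref{conj1}) in its equivalent form $AM=MD$: both products equal $\bigl(\begin{smallmatrix}r&s\\r^{2}&s^{2}\end{smallmatrix}\bigr)$ after one uses $r^{2}=r+1$ and $s^{2}=s+1$.

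Item (\ref{item7}) is the main content. Since conjugation by a unit is an automorphism of the group of units of $\calM(2,R)$, item (\ref{item6}) gives $\ord(A)=\ord(D)$. For the diagonal matrix $D$, one has $D^{n}=I$ iff $r^{n}=1$ and $s^{n}=1$, so $\ord(D)$ is finite iff both $\ord(r)$ and $\ord(s)$ are finite (in which case $\ord(D)=\lcm(\ord(r),\ord(s))$). Combining with the introductory identification of the Pisano period with $\ord(A)$, we obtain the stated equivalence. The main (mild) obstacle is conceptual rather than computational: one must remember that inner automorphisms preserve orders in any group of units, so the matrix conjugation of item (\ref{item6}) transports faithfully to the equality of orders used here; everything else reduces to direct substitution into $r^{2}=r+1$.
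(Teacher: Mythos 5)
Your proof is correct and follows essentially the same route as the paper: the same determinant argument for item (\ref{item5}), the same verification of $AM=MD$ for item (\ref{item6}), and the same reduction of the order of the companion matrix to $\lcm(\ord(r),\ord(s))$ for item (\ref{item7}). The only cosmetic differences are that you expand $(1-r)^2-(1-r)-1$ directly instead of invoking the automorphism $\sigma$, and compute $(r-s)^2$ as $(2r-1)^2$ rather than via $rs=-1$.
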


\begin{proof}
\ref{item1}) Remarking that $\sigma(x^2-x-1)=x^2-x-1$ we get the result.\\
\ref{item2}) $r(-s)=r(r-1)=r^2-r=1$ then $r^{-1}=-s$ and similarly $s^{-1}=-r$\\
\ref{item3}) $$(r-s)^2=r^2-2rs+s^2\underbrace{=}_{rs=-1}=r^2+s^2+2\underbrace{=}_{\substack{r^2=r+1\\ 
s^2=s+1\\ r+s=1}}=5$$ 
Now, if $y^2=5$ and as $5$ is a unit, we have 
$y.(y/5)=y^2.5=1$ which proves that $\pm(r-s)$ are units in $R$.\\
\ref{item4}) $(r^n)$ and $(s^n)$ are sequences satisfying 
$x_{n+2}=x_n+x_{n+1}$ and the same holds for all their linear combinations in particular 
$$
t_n=\dfrac{r^n-s^n}{r-s}
$$  
it suffices now to check that $t_i=F_i$ for $i=0,1$. QED 
\\
\ref{item5}) One has 
\begin{equation}
\begin{vmatrix}
1 & 1\\
r & s
\end{vmatrix}
=s-r
\end{equation}
which is a unit from point \eqref{item3}. Therefore 
\begin{equation}
\begin{pmatrix}
1 & 1\\
r & s
\end{pmatrix}^{-1}
=\dfrac{1}{s-r}
\begin{pmatrix}
s & -1\\
-r & 1
\end{pmatrix}
\end{equation}
\\
\ref{item6}) We have 
\begin{equation}
\begin{pmatrix}
0 & 1\\
1 & 1
\end{pmatrix}
\begin{pmatrix}
1 & 1\\
r & s
\end{pmatrix}
=
\begin{pmatrix}
1 & 1\\
r & s
\end{pmatrix}
\begin{pmatrix}
r & 0\\
0 & s
\end{pmatrix}
\end{equation}
Hence \eqref{conj1}.
\\
\ref{item7}) Again, from equation \eqref{conj1}, one 
has in 
$\calM(2,R)^{\times}$ and $R^\times$
\begin{equation}
\ord(\begin{pmatrix} 0 & 1\\1 & 1\end{pmatrix})=
\lcm(\ord(r),\ord(s))
\end{equation} 
\end{proof}

We are now in the position to state

\begin{proposition} We suppose the hypotheses of Lemma \ref{lem1}. Then 
\begin{enumerate}
\item The roots, $r,s$ of (E) are both of finite or infinite order.\label{item8}
\item In the case when both roots are of finite order
\begin{enumerate}
\item Either $2.1_R=0_R$ and $\ord(r)=\ord(s)=3$.\label{item8.5}
\item Or ($2\not=0$ and) one of the orders of $r,s$ is odd (say $\ord(r)\equiv\ 1\mod 2$) and the other one is the double of it ($\ord(s)=2\ord(r)$).\label{item9}
In this case, the Pisano period is $\ord(s)=2\ord(r)$
\item Or ($2\not=0$ and) their orders are all even.\label{item10}
In this case $\ord(s)=\ord(r)$ and the Pisano period is 
$\ord(s)=\ord(r)$.
\end{enumerate}
\end{enumerate}
\end{proposition}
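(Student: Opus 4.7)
The plan is to derive all six assertions from the single algebraic identity $rs=-1_R$ (Lemma \ref{lem1}, item \ref{item2}), rewritten as $s=-r^{-1}$, which gives $s^k=(-1)^k r^{-k}$ for every $k\in\Z$. Combined with Lemma \ref{lem1}, item \ref{item7}, which reduces the Pisano period to $\lcm(\ord(r),\ord(s))$, this identity will do essentially all the work.

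First I would prove item \ref{item8} directly: if $\ord(r)=n<\infty$, then $s^{2n}=r^{-2n}=1_R$, so $s$ has order dividing $2n$; the converse is symmetric. For item \ref{item8.5}, I would specialise to characteristic $2$, where (E) becomes $x^2+x+1=0$, so $r$ is a root of $x^3-1=(x-1)(x^2+x+1)$; since $x=1$ is not a root of (E) (substitution gives $-1_R\neq 0_R$), one must have $\ord(r)=3$, and symmetrically $\ord(s)=3$.

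For the remaining cases, assume $2\neq 0$ and set $n:=\ord(r)$, $m:=\ord(s)$ (both finite by item \ref{item8}). The auxiliary observation that organises the case analysis is that $n$ and $m$ cannot both be odd: otherwise $s^m=1$ with $m$ odd forces $r^m=(-1)^m=-1$, and then $r^{nm}=(r^m)^n=(-1)^n=-1$ would contradict $r^{nm}=(r^n)^m=1_R$ since $2\neq 0$. So either exactly one of $n,m$ is odd, or both are even.

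In the first configuration I would relabel so that $n$ is odd. The same parity argument forbids $s^k=1$ for any odd $k$, so $m$ is even; then $s^m=1$ gives $r^m=1$, hence $n\mid m$, while $s^{2n}=r^{-2n}=1_R$ gives $m\mid 2n$, and the parity constraint ($n$ odd, $m$ even) pins down $m=2n$. In the second configuration, both $n,m$ are even, and $s^n=r^{-n}=1_R$ gives $m\mid n$, while symmetrically $n\mid m$, so $n=m$. Plugging these values into $\lcm(\ord(r),\ord(s))$ produces the stated Pisano periods $2n$, $n$, and $3$ respectively. The only step that really needs thought is the parity contradiction ruling out "both odd"; everything else is direct substitution from $s=-r^{-1}$ together with Lemma \ref{lem1}, item \ref{item7}.
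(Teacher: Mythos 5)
Your proposal is correct and follows essentially the same route as the paper: everything is extracted from the identity $s=-r^{-1}$ (equivalently $rs=-1_R$) together with the reduction of the Pisano period to $\lcm(\ord(r),\ord(s))$ from Lemma \ref{lem1}, item \ref{item7}, and your parity manipulations (excluding ``both odd'', then pinning down $\ord(s)=2\ord(r)$ or $\ord(s)=\ord(r)$ via mutual divisibility) match the paper's computation \eqref{cruc1} and its sequel. The only cosmetic difference is in the characteristic-$2$ case, where you invoke the factorization $x^3-1=(x-1)(x^2+x+1)$ instead of listing the cyclic group $\{1,g,g+1\}$; both arguments amount to $g^3=1$, $g\neq 1$.
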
 
\begin{proof}
\ref{item8}) Due to the fact that 
\begin{equation}\label{invol1}
s=(-1/r) \mbox{ and }r=(-1/s)
\end{equation}
\item \ref{item8.5}) For $g\in\{r,s\}$, the group generated by $g$ is $\{1,g,g+1\}$ hence $g$ is of order $3$. From now on we assume that $2\not=0$.\\
\ref{item9}) Let us now suppose that 
$\ord(r)\equiv\ 1\mod 2$. Then, due to \eqref{invol1}, we have 
\begin{equation}\label{cruc1}
1=\left(s^{\ord(s)}\right)^{\ord(r)}=
\left((-1/r)^{\ord(r)}\right)^{\ord(s)}=(-1)^{\ord(s)}
\end{equation}
which, due to the fact that $2.1_R\not= 0_R$, entails that $s$ is of even order.\\
But, in the same way, equation \eqref{invol1} shows that 
$s^{2.\ord(r)}=(-1)^{2.\ord(r)}=1$ and then 
$\ord(s)|2.\ord(r)$.\\
On the other hand 
$$
r^{\ord(s)}=(-1/s)^{\ord(s)}=(-1)^{\ord(s)}=1
$$
this proves that $\ord(r)|\ord(s)$. By Gauss lemma and taking into account that $\ord(r)$ is odd and $\ord(s)$ is even, we also have $2.\ord(r)|\ord(s)$.\\
Finally $\ord(s)|2.\ord(r)|\ord(s)$ therefore  
$\ord(s)=2.\ord(r)$.\\
\ref{item10}) Similar reasoning using \eqref{invol1}.   
\end{proof} 

\subsubsection*{Second case: $x^2-x-1=0$ has no roots in 
$R$.} 

We consider $R_1:=\bigslant{R[x]}{(x^2-x-1)}$ and, with 
$r=\bar{x}$ (considering that $5$ is still a unit in $R_1$), we are brought back to the preceding case. In this case, however, we remark that we have the automorphism 
$\sigma$ sending $x$ to $(1-x)$ which preserves the ideal 
$(x^2-x-1)$ and then passes to quotients (see Figure \ref{Fig1}). 
\begin{figure}[h!]
\centering
\begin{tikzcd}[column sep=2cm, row sep=0.8cm]
R[x] \arrow[r,"\sigma"]
\arrow[d, "p"]&  
R[x]\arrow[d, "p"]\\
\bigslant{R[x]}{(x^2-x-1)}\arrow[r,"\bar{\sigma}"] & 
\bigslant{R[x]}{(x^2-x-1)}
\end{tikzcd}
\caption{Involutive automorphism of 
$\bigslant{R[x]}{(x^2-x-1)}$ without Galois theory.}\label{Fig1}
\end{figure}
this proves that $r$ and $\bar{\sigma}(r)=1-r$ are of the same order\footnote{This common order is necessarily even by the examination of the first case. }.

\end{document}